\newcommand{\doublewidetilde}[1]{{%
		\mathpalette\double@widetilde{#1}%
}}
\def\BState{\State\hskip-\ALG@thistlm}
\newtheorem{lemma}{Lemma}
\begin{document}
	%
	\title{Pilot Assignment for Joint Uplink-Downlink Spectral Efficiency Enhancement in Massive MIMO Systems with Spatial Correlation}

	\author{\IEEEauthorblockN{ Tien Hoa Nguyen, Trinh Van Chien,  \textit{Member}, \textit{IEEE}, Hien Quoc Ngo, \textit{Senior Member}, \textit{IEEE}, \\ Xuan~Nam~Tran, \textit{Member}, \textit{IEEE}, and Emil Bj\"{o}rnson, \textit{Senior Member}, \textit{IEEE}}
		\thanks{Manuscript  received  April 25,  2021;    accepted June  14,  2021. The associate editor coordinating the review of this paper and approving it for publication was Dr. J. Joung. (Corresponding author: Trinh Van Chien).
			
			This paper was accepted by IEEE Transactions on Vehicular Technology.  ©2021IEEE. Personal use of this material is permitted. Permission from IEEE must be  obtained  for  all  other  uses,  in  any  current  or  future  media,  including reprinting/republishing this material for advertising or promotional purposes,creating new collective  works, for resale or redistribution to  servers or lists,or reuse of any copyrighted component of this work in other works.
			
			T. H. Nguyen  is with the School of Electronics and Telecommunications (SET), Hanoi University of Science and Technology, Vietnam (email: hoa.nguyentien@hust.edu.vn).
			
			T. V. Chien  was with the School of Information and Communication Technology (SoICT) – Hanoi University of Science and Technology (HUST), Vietnam. He is now with the  Interdisciplinary  Centre  for  Security,  Reliability  and  Trust  (SnT),University  of  Luxembourg,  L-1855  Luxembourg,  Luxembourg (email: trinhchien.dt3@gmail.com).
			
			H. Q. Ngo is with the Institute of Electronics, Communications and Information Technology (ECIT), Queen's University Belfast, Belfast, BT3 9DT, United Kingdom (e-mail: hien.ngo@qub.ac.uk).
			
			X. N. Tran is with Advanced Wireless Communications Group, Le Quy Don Technical University, Vietnam (email: namtx@lqdtu.edu.vn).
			
			E. Bj\"ornson is with the Department of Electrical Engineering, Link\"oping University, 581 83 Link\"oping, Sweden, and with the Department of Computer Science, KTH, 164 40 Kista, Sweden (email: emilbjo@kth.se).
			
			This research is funded by Vietnam National Foundation for Science and Technology Development (NAFOSTED) under grant number 102.01-2019.07.}
	}

	\maketitle
	
	
	\begin{abstract}
		This paper proposes a flexible pilot assignment method to jointly optimize the uplink and downlink data transmission in multi-cell Massive multiple input multiple output (MIMO) systems with correlated Rayleigh fading channels. By utilizing a closed-form expression of the ergodic spectral efficiency (SE) achieved with maximum ratio processing, we formulate an optimization problem for maximizing the minimum weighted sum of the uplink and downlink SEs subject to the transmit powers and pilot assignment sets. This combinatiorial optimization problem is solved by two sequential algorithms: a heuristic pilot assignment is first proposed to obtain a good pilot reuse set and the data power control is then implemented. Numerical results manifest that the proposed algorithm converges fast to a better minimum sum SE per user than the algorithms in previous works. 
	\end{abstract}
	
	\begin{IEEEkeywords}
		Massive MIMO, Max-Min Fairness Optimization, Pilot Assignment.
	\end{IEEEkeywords}
	
	\IEEEpeerreviewmaketitle
	
	\section{Introduction}
	Massive multiple input multiple output (MIMO) is a key component of 5G since this technology can improve ergodic spectral efficiency (SE) with the orders of magnitude compared to single-antenna systems. It is achieved by equipping each base station (BS) with a large number of antennas such that the system can spatially multiplex tens of users at the same time and frequency resource \cite{Marzetta2016a,le2020pareto}. The quasi-orthogonality of all channels allows a simple linear beamforming to yield SE close to the channel capacity for single-cell systems, in which orthogonal pilot signals are assumed to be available for all users. In cellular Massive MIMO systems, it is an impractical assumption because the pilot overhead is directly proportional to the total number of users \cite{Chien2018a}, while the coherence interval is limited. A small set of orthogonal pilot signals should be reused among the cells, which results in mutually correlated interference known as pilot contamination that downgrades the ergodic SE \cite{Bjornson2017bo}.
	
	In order to mitigate pilot contamination, an observation is that some users will cause more severe contamination to each other when they use the same pilot. This issue should be avoided in the pilot assignment. The system can reuse pilot signals in a way that gives these users a priority to use orthogonal pilots. Nonetheless, an optimal pilot assignment is expensive to find since it is attained by solving a combinatorial problem. Heuristic algorithms with affordable complexity are necessary in practice to eliminate pilot contamination at a reasonable cost. It should be noticed that most previous works only consider the pilot assignment for either the uplink or downlink transmission, see \cite{Chien2018a, Jin2015a} and references therein. The authors \cite{marinello2017joint} were the first to propose a heuristic pilot assignment algorithm taking both uplink and downlink into account, but based on the asymptotic SE from uncorrelated Rayleigh fading that behaves differently from the capacity regime for a limited number of BS antennas and spatially correlated channels. Meanwhile, the combinations of pilot signals to obtain the pilot assignment increases rapidly with the total number of users in all cells making it an intractable solution. Motivated by the fact that uncorrelated channels rarely appear in practice \cite{Gao2015a}, the pilot assignment for spatially correlated channels were considered in \cite{Chien2018a, You2015a}, but the pilot assignment for jointly enhancing the uplink and downlink SE has not been considered in this context.
	
	In this paper, we  assign the pilot signals to jointly maximize the minimum weighted sum of uplink and downlink SE per user with spatially correlated channels. This optimization problem has flexibility since the weights can be used to assign different priorities to the downlink and uplink. The optimization problem is based only on statistical channel information, thus the obtained solution can be utilized as long as the statistics remain the same.  Since this optimization problem is combinatorial and NP-hard, we propose a heuristic pilot assignment that works well for systems with a limited number of BS antennas. The obtained pilot assignment solution outperforms the other benchmarks from previous works.
	
	\textit{Notation}: The upper and lower-bold letters denote vectors and matrices, respectively. The notation $(\cdot)^H$ is the Hermitian transpose. $\mathbb{E} \{ \cdot \}$ is the expectation of a random variable, while $\mathcal{CN}(\cdot,\cdot)$ denotes circularly complex Gaussian distribution. The identity matrix of size $M\times M$ is denoted by $\mathbf{I}_M$. Finally, $\mathrm{tr}(\cdot)$ is the trace operator, $\| \cdot \|$ is Euclidean norm, and $\mathcal{O}(\cdot)$ denotes the big-$\mathcal{O}$ notation.
	
	\section{System Model} \label{Section: System Model}
	We consider a multi-cell Massive MIMO system comprising $L$ cells, each with an $M$-antenna BS serving $K$ single-antenna users. The system uses a time-division duplexing protocol. Let $\tau_c$ be the length of each coherence block whereof $\tau_p$ symbols are used for the uplink training and the remaining is used for the data transmission. We denote by $\gamma^{\mathrm{ul}}$ and $\gamma^{\mathrm{dl}}$ the fraction of the $\tau_c - \tau_p$ symbols used for the uplink and downlink data transmission and satisfied $\gamma^{\mathrm{ul}} + \gamma^{\mathrm{dl}} =1$. The set $\mathcal{S}$ contains all tuple of cell and user indices in the system as
	\begin{equation}
		\mathcal{S} = \left\{ (i,t): i \in \{1, \ldots, L\}, t \in \{1, \ldots, K \} \right\}.
	\end{equation} 
	The channel between user~$t$ in cell~$i$ and BS~$l$ is denoted as $\mathbf{h}_{i,t}^l \in \mathbb{C}^M$ and is assumed to feature correlated Rayleigh fading:
	$\mathbf{h}_{i,t}^l \sim \mathcal{CN} (\mathbf{0}, \mathbf{R}_{i,t}^l)$,
	where $\mathbf{R}_{i,t}^l \in \mathbb{C}^{M \times M}$ is the channel correlation matrix. All BSs know the channel statistics, but need to estimate the realizations in each coherence block.\footnote{For sake of the simplicity, we assume that the channel correlation matrices are known. In practical systems, we can easily estimate the channel correlation matrices by averaging over many different instantaneous channel realizations.}

	\subsection{Uplink training phase}
	Let us introduce a set $\mathcal{P}$ of $\tau_p$ mutually orthogonal pilot signals, $K \leq \tau_p \leq KL,$ reused among the users. The pilot signal assigned to user~$t$ in cell~$l$ is denoted as $\pmb{\psi}_{i,t}$. In the uplink training phase, the received baseband pilot signal $\mathbf{Y}_{p,l} \in \mathbb{C}^{M \times \tau_p}$ at BS~$l$ is
	\begin{equation}\label{eq:Ylk}
		\mathbf{Y}_{p,l} = \sum_{(i,t) \in \mathcal{S} } \mathbf{h}_{i,t}^l \pmb{\psi}_{i,t}^H  + \mathbf{N}_{p,l},
	\end{equation}
	where $\mathbf{N}_{p,l}$ is an $M \times \tau_p$ noise matrix with independent elements distributed as $\mathcal{CN}(0, \sigma_{\textrm{UL}}^2)$  and $\sigma_{\textrm{ul}}^2$ being the noise variance in the uplink. The channel estimate $\hat{\mathbf{h}}_{l,k}^l$ of $\mathbf{h}_{l,k}^l$ is obtained from \eqref{eq:Ylk} by MMSE estimation \cite{Bjornson2017bo} as
	\begin{equation} \label{eq:ChannelEst}
		\begin{split}
			\hat{\mathbf{h}}_{l,k}^l 
			&=  \| \pmb{\psi}_{l,k} \|^2 \mathbf{R}_{l,k}^l (\mathbf{F}_{l,k}^l)^{-1} \mathbf{Y}_{p,l} \pmb{\psi}_{l,k},
		\end{split}
	\end{equation}
	where $\mathbf{F}_{l,k}$ is given as
	\begin{equation}
	\mathbf{F}_{l,k} = \sum_{(i,t) \in \mathcal{S} } \mathbf{R}_{i,t} ^l | \pmb{\psi}_{i,t}^H \pmb{\psi}_{l,k} |^2 +   \sigma_{\mathrm{UL}}^2 \| \pmb{\psi}_{l,k} \|^2 \mathbf{I}_M.
	\end{equation}
	For all $l,k,$ the channel estimates are distributed as
	\begin{equation} \label{eq:DistChannelEst}
		\hat{\mathbf{h}}_{l,k}^l \sim \mathcal{CN} \left( \mathbf{0}, \| \pmb{\psi}_{l,k} \|^4 \mathbf{R}_{l,k}^l \mathbf{F}_{l,k}^{-1} \mathbf{R}_{l,k}^l \right).
	\end{equation}
	The channel estimate in \eqref{eq:ChannelEst} together with its statistical information in \eqref{eq:DistChannelEst} are used to formulate the linear processing vectors for the data transmission and compute closed-form ergodic SE expressions for each user.

	\subsection{Data transmission}
	In the uplink data transmission, the $K$ users in each cell simultaneously send data to the serving BS. Specifically, user $k$ in cell $l$ sends a complex data symbol $s_{l,k}$ with $\mathbb{E} \{ |s_{l,k} |^2\} = 1$. The received signal $\mathbf{y}_l \in \mathbb{C}^M$ at BS~$l$ is
	\begin{equation}
		\mathbf{y}_l = \sum_{(i,t) \in \mathcal{S} } \sqrt{p_{i,t}^{\mathrm{ul}}} \mathbf{h}_{i,t}^l s_{i,t} + \mathbf{n}_l,
	\end{equation}
	where $p_{i,t}^{\mathrm{ul}}$ is the transmit data power and $\mathbf{n}_l \in \mathbb{C}^M$ is the uplink Gaussian noise with $\mathbf{n}_l \sim \mathcal{CN}(\mathbf{0}, \sigma_{\textrm{ul}}^2 \mathbf{I}_M )$. We assume BS~$l$ detects the desired signal from its user~$k$ by utilizing a maximum-ratio combining vector as 
	\begin{equation}
	\mathbf{v}_{l,k} = \hat{\mathbf{h}}_{l,k}^l.
	\end{equation}
	The desired signal is then obtained from
	\begin{equation} \label{eq: DetectedSignal}
		\begin{split}
			& \mathbf{v}_{l,k}^{H} \mathbf{y}_l = \sum_{(i,t) \in \mathcal{S} } \sqrt{p_{i,t}^{\mathrm{ul}}} \hat{\mathbf{h}}_{l,k}^{l,H} \mathbf{h}_{i,t} s_{i,t} + \hat{\mathbf{h}}_{l,k}^{l,H} \mathbf{n}_l.
		\end{split}
	\end{equation}
	
	In the downlink data transmission, BS~$l$ transmits a signal $\mathbf{x}_l \in \mathbb{C}^M$ to its $K$ users, which is formulated as
	\begin{equation}
		\mathbf{x}_l = \sum_{t=1}^K \sqrt{p_{l,t}^{\mathrm{dl}}} \mathbf{w}_{l,t} q_{l,t},
	\end{equation}
	where $p_{l,t}^{\mathrm{dl}}$ is the power allocated to the data symbol $q_{l,t}$ with $\mathbb{E} \{ |q_{l,t}|^2 \} = 1$. The maximum ratio (MR) precoding vector
	\begin{equation} \label{eq:NormalizedMR}
		\mathbf{w}_{l,t} = \frac{\hat{\mathbf{h}}_{l,t}^l}{\sqrt{ \mathbb{E} \left\{ \| \hat{\mathbf{h}}_{l,t}^l \|^2 \right\}}} =  \frac{\hat{\mathbf{h}}_{l,t}^l}{ \sqrt{ \| \pmb{\psi}_{l,k} \|^4 \mathrm{tr} ( \mathbf{R}_{l,k}^l \mathbf{F}_{l,k}^{-1}  \mathbf{R}_{l,k}^l)}},
	\end{equation}
	is used. The received signal at user $k$ in cell $l$ is a superposition of the  signals from all $L$ BSs as
	\begin{equation} \label{eq:DLReceivedSig}
		\begin{split}
			&r_{l,k} =  \sum_{(i,t) \in \mathcal{S} } \sqrt{p_{i,t}^{\mathrm{dl}}}  \left(\mathbf{h}_{l,k}^{i} \right)^H \mathbf{w}_{i,t} q_{i,t} + n_{l,k},
		\end{split}
	\end{equation} 
	where $n_{l,k}$ denotes the additive noise which is distributed as $n_{l,k} \sim \mathcal{CN}(0, \sigma_{\textrm{dl}}^2)$ and $\sigma_{\textrm{dl}}^2$ is the noise variance in the downlink. By applying the standard Massive MIMO methodology \cite{Bjornson2017bo} to \eqref{eq: DetectedSignal} and \eqref{eq:DLReceivedSig}, the closed-form expression of the ergodic uplink and downlink SEs in Lemma~\ref{lemma:ClosedForm} are attained. 
	\begin{lemma}  \label{lemma:ClosedForm}
		Closed-form expression of the uplink and downlink ergodic SEs of user $k$ in cell $l$ are respectively
		\begin{align} 
			R_{l,k}^{\mathrm{ul}}  &= \gamma^{\mathrm{ul}} \left(1  - \frac{\tau_p}{\tau_c} \right) \log_2 \left( 1 + \mathrm{SINR}_{l,k}^{\mathrm{ul}} \right), \label{eq:ClosedULRate}\\
			R_{l,k}^{\mathrm{dl}} &= \gamma^{\mathrm{dl}} \left(1 - \frac{\tau_p}{\tau_c} \right) \log_2 \left(1 + \mathrm{SINR}_{l,k}^{\mathrm{dl}} \right), \label{eq:DLCorrelatedRate}
		\end{align}
		where the effective SINR values are given in \eqref{eq:ULSINR} and \eqref{eq:DLSINR}.
		\begin{figure*}
			\begin{equation} \label{eq:ULSINR}
				\mathrm{SINR}_{l,k}^{\mathrm{ul}} = \frac{p_{l,k}^{\mathrm{ul}} \| \pmb{\psi}_{l,k} \|^4 \mathrm{tr} ( \mathbf{R}_{l,k}^l \mathbf{F}_{l,k}^{-1} \mathbf{R}_{l,k}^l  ) }{ \sum_{(i,t) \in \mathcal{S} \setminus \{(l,k) \}} p_{i,t}^{\mathrm{ul}} | \pmb{\psi}_{l,k}^H \pmb{\psi}_{i,t} |^2 \frac{| \mathrm{tr} \left( \mathbf{R}_{i,t}^l \mathbf{F}_{l,k}^{-1} \mathbf{R}_{l,k}^l \right) |^2}{\mathrm{tr} (\mathbf{R}_{l,k}^{l} \mathbf{F}_{l,k}^{-1}  \mathbf{R}_{l,k}^{l} )} + \sum_{(i,t) \in \mathcal{S} }  p_{i,t}^{\mathrm{ul}} \frac{\mathrm{tr} (  \mathbf{R}_{i,t}^l \mathbf{R}_{l,k}^{l} \mathbf{F}_{l,k}^{-1}  \mathbf{R}_{l,k}^{l} ) }{\mathrm{tr} (\mathbf{R}_{l,k}^{l} \mathbf{F}_{l,k}^{-1}  \mathbf{R}_{l,k}^{l} )} +  \sigma_{\mathrm{ul}}^2}
			\end{equation}
			\begin{equation} \label{eq:DLSINR}
				\mathrm{SINR}_{l,k}^{\mathrm{dl}} = \frac{p_{l,k}^{\mathrm{dl}} \| \pmb{\psi}_{l,k} \|^4  \mathrm{tr} ( \mathbf{R}_{l,k}^l \mathbf{F}_{l,k}^{-1} \mathbf{R}_{l,k}^l ) }{  \sum_{(i,t) \in \mathcal{S} \setminus \{(l,k) \}} p_{i,t}^{\mathrm{dl}} |\pmb{\psi}_{l,k}^H \pmb{\psi}_{i,t}|^2 \frac{  | \mathrm{tr} (  \mathbf{R}_{i,t}^{i} \mathbf{F}_{i,t}^{-1} \mathbf{R}_{l,k}^{i}) |^2  }{\mathrm{tr} \left(\mathbf{R}_{i,t}^i  \mathbf{F}_{i,t}^{-1} \mathbf{R}_{i,t}^i\right) }  + \sum_{(i,t) \in \mathcal{S}} p_{i,t}^{\mathrm{dl}} \frac{   \mathrm{tr} (  \mathbf{R}_{i,t}^{i} \mathbf{F}_{i,t}^{-1} \mathbf{R}_{i,t}^{i} \mathbf{R}_{l,k}^{i}) }{\mathrm{tr} (\mathbf{R}_{i,t}^i  \mathbf{F}_{i,t}^{-1} \mathbf{R}_{i,t}^i ) } + \sigma_{\mathrm{dl}}^2} 
			\end{equation}
			\hrulefill
		\end{figure*}
	\end{lemma}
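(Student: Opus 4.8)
The plan is to derive both effective SINRs from the standard use-and-then-forget (UatF) bound of the Massive MIMO methodology \cite{Bjornson2017bo}. In this approach one keeps the \emph{mean} effective channel as the known desired gain and treats its fluctuation, the multi-user interference, and the noise as a single uncorrelated effective-noise term; the resulting achievable rate has exactly the pre-log form of \eqref{eq:ClosedULRate}--\eqref{eq:DLCorrelatedRate}, the pre-log factor being the fraction of channel uses carrying data in each direction, so it only remains to show that the embedded ratios equal \eqref{eq:ULSINR} and \eqref{eq:DLSINR}. For the uplink the bound gives, with $\mathbf{v}_{l,k} = \hat{\mathbf{h}}_{l,k}^l$,
\[
\mathrm{SINR}_{l,k}^{\mathrm{ul}} = \frac{ p_{l,k}^{\mathrm{ul}} \bigl| \mathbb{E}\{ \mathbf{v}_{l,k}^H \mathbf{h}_{l,k}^l \} \bigr|^2 }{ \sum_{(i,t)} p_{i,t}^{\mathrm{ul}} \mathbb{E}\{ |\mathbf{v}_{l,k}^H \mathbf{h}_{i,t}^l|^2 \} - p_{l,k}^{\mathrm{ul}} \bigl| \mathbb{E}\{ \mathbf{v}_{l,k}^H \mathbf{h}_{l,k}^l \} \bigr|^2 + \sigma_{\mathrm{ul}}^2 \mathbb{E}\{ \| \mathbf{v}_{l,k} \|^2 \} },
\]
and an analogous ratio holds for the downlink with $\mathbf{w}_{i,t}$ of \eqref{eq:NormalizedMR} in place of the combiner and $\sigma_{\mathrm{dl}}^2$ as the noise. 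The whole proof therefore reduces to evaluating a handful of first- and second-order moments of inner products between channel estimates and true channels.

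The first ingredient is the signal gain. Because $\hat{\mathbf{h}}_{l,k}^l$ is the MMSE estimate, the estimation error is orthogonal to it, so $\mathbb{E}\{\mathbf{v}_{l,k}^H\mathbf{h}_{l,k}^l\} = \mathbb{E}\{\|\hat{\mathbf{h}}_{l,k}^l\|^2\} = \|\pmb{\psi}_{l,k}\|^4\,\mathrm{tr}(\mathbf{R}_{l,k}^l\mathbf{F}_{l,k}^{-1}\mathbf{R}_{l,k}^l)$ by \eqref{eq:DistChannelEst}; this quantity also equals $\mathbb{E}\{\|\mathbf{v}_{l,k}\|^2\}$ and hence carries both the numerator gain and the noise term. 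The second ingredient is the interference moment. Since $\hat{\mathbf{h}}_{l,k}^l$ and $\mathbf{h}_{i,t}^l$ are jointly circularly-symmetric Gaussian, I would apply the Isserlis/Wick fourth-moment identity
\[
\mathbb{E}\{|\hat{\mathbf{h}}_{l,k}^{l,H}\mathbf{h}_{i,t}^l|^2\} = \bigl|\mathbb{E}\{\hat{\mathbf{h}}_{l,k}^{l,H}\mathbf{h}_{i,t}^l\}\bigr|^2 + \mathrm{tr}\!\bigl( \mathbf{R}_{i,t}^l\, \mathbb{E}\{\hat{\mathbf{h}}_{l,k}^l \hat{\mathbf{h}}_{l,k}^{l,H}\} \bigr),
\]
which holds whether or not the two vectors are correlated. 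The second (non-coherent) term follows at once from the estimate covariance in \eqref{eq:DistChannelEst} and, after normalization, yields the $\mathrm{tr}(\mathbf{R}_{i,t}^l\mathbf{R}_{l,k}^l\mathbf{F}_{l,k}^{-1}\mathbf{R}_{l,k}^l)/\mathrm{tr}(\mathbf{R}_{l,k}^l\mathbf{F}_{l,k}^{-1}\mathbf{R}_{l,k}^l)$ summand of \eqref{eq:ULSINR}.

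The coherent term is where the pilot structure enters, and this is the step I expect to demand the most care. Substituting \eqref{eq:ChannelEst} together with the pilot expansion $\mathbf{Y}_{p,l}\pmb{\psi}_{l,k} = \sum_{(i',t')\in\mathcal{S}}(\pmb{\psi}_{i',t'}^H\pmb{\psi}_{l,k})\,\mathbf{h}_{i',t'}^l + \mathbf{N}_{p,l}\pmb{\psi}_{l,k}$ into the cross-correlation, independence of the channels across distinct tuples and of the noise kills every summand except $(i',t')=(i,t)$, leaving $\mathbb{E}\{\hat{\mathbf{h}}_{l,k}^{l,H}\mathbf{h}_{i,t}^l\} = \|\pmb{\psi}_{l,k}\|^2(\pmb{\psi}_{l,k}^H\pmb{\psi}_{i,t})\,\mathrm{tr}(\mathbf{R}_{i,t}^l\mathbf{F}_{l,k}^{-1}\mathbf{R}_{l,k}^l)$. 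The factor $|\pmb{\psi}_{l,k}^H\pmb{\psi}_{i,t}|^2$ thus appears only for pilot-sharing tuples, and the $(i,t)=(l,k)$ contribution is precisely the desired-signal power subtracted in the UatF denominator; what remains is the coherent (pilot-contamination) sum of \eqref{eq:ULSINR} over $(i,t)\neq(l,k)$. Collecting the three pieces and dividing numerator and denominator by the common factor $\mathbb{E}\{\|\mathbf{v}_{l,k}\|^2\} = \|\pmb{\psi}_{l,k}\|^4\mathrm{tr}(\mathbf{R}_{l,k}^l\mathbf{F}_{l,k}^{-1}\mathbf{R}_{l,k}^l)$ produces \eqref{eq:ULSINR} exactly.

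The downlink SINR \eqref{eq:DLSINR} follows from the identical two identities applied to the normalized precoders of \eqref{eq:NormalizedMR}: now BS~$i$ beamforms along $\hat{\mathbf{h}}_{i,t}^i$ while the governing channel is $\mathbf{h}_{l,k}^i$, so the estimate covariance and the cross-correlation are formed at BS~$i$ and carry $\mathbf{F}_{i,t}^{-1}$ and the trace $\mathrm{tr}(\mathbf{R}_{i,t}^i\mathbf{F}_{i,t}^{-1}\mathbf{R}_{i,t}^i)$. The normalization in \eqref{eq:NormalizedMR} makes $\mathbb{E}\{\|\mathbf{w}_{i,t}\|^2\}=1$ and cancels the $\|\pmb{\psi}\|^4$ prefactors, so the noise term is simply $\sigma_{\mathrm{dl}}^2$ and no further rescaling is needed. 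Apart from interchanging the cell indices and tracking which BS performs the estimation, the computation is structurally the same as the uplink, so no new obstacle beyond this bookkeeping arises.
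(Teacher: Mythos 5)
Your proposal is correct and takes essentially the same approach as the paper: the paper's proof is a one-line citation to Corollaries 4.5 and 4.9 of \cite{Bjornson2017bo}, whose derivations are precisely the use-and-then-forget/channel-hardening bounds combined with Gaussian (Wick) moment computations for MR processing with MMSE estimates that you carry out. Your writeup simply makes explicit the details the paper delegates to the citation, including the generalization to an arbitrary pilot reuse pattern via the overlap factors $|\pmb{\psi}_{l,k}^H\pmb{\psi}_{i,t}|^2$.
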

	\begin{proof}
		The proof follows along the lines of Corollaries 4.5 and 4.9 in \cite{Bjornson2017bo} except for the different notations and the fact that pilot reuse pattern is arbitrary and not defined in advance.
	\end{proof}
In the SINR expressions, the numerator represents an array gain as the trace of the covariance matrix is proportional to the number of BS antennas. The first term of the denominator represents coherent interference originating from pilot contamination caused by the pilot reuse and it grows with the number of BS antennas. The last terms of the denominator are noncoherent interference and noise. While the uplink SINR expression of each user only depends on the channel estimate of this own user, a superposition of the channel estimation quality from all the users are observed in the downlink SINR expression. The denominators of the SINR expressions in \eqref{eq:ULSINR} and \eqref{eq:DLSINR} indicate the different contributions of a pilot reuse pattern to the uplink and downlink data transmission. The coupled nature motivates a pilot assignment for jointly optimizing the both SEs per user instead of either the uplink or downlink SE as in most previous works.
	
\begin{figure}[t]
	\centering
	\includegraphics[trim=0cm 0cm 0cm 0cm, clip=true, width=2.3in]{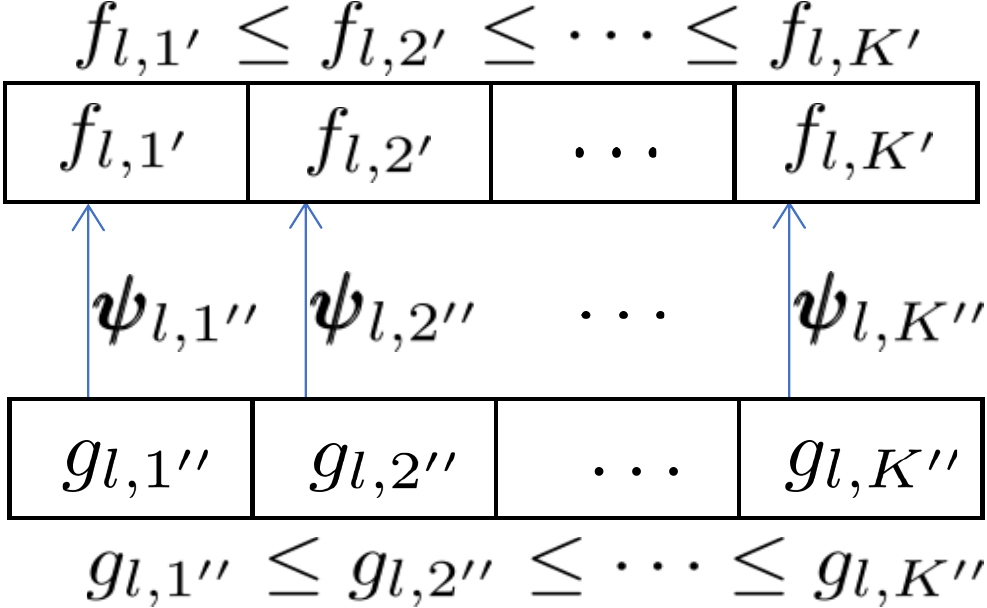} 
	\caption{The proposed pilot assignment for the $K$ users in cell~$l$.}
	\label{FigPilotAssigment}
\end{figure}

\section{Max-Min Fairness Optimization}
This section studies the pilot assignment for the weighted max-min sum SE per user fairness problem with uplink and downlink transmit power constraints. Due to the inherent non-convexity, we propose a heuristic algorithm to obtain a good local solution with tolerable computational complexity.
\subsection{Problem formulation}
By introducing the weights $\{w_{l,k}^{\mathrm{ul}}, w_{l,k}^{\mathrm{dl}} \}$ that prioritize the uplink and downlink transmission of arbitrary user~$k$ in cell~$l$, the optimization problem is formulated for a given set of orthogonal pilot signals as 
\begin{subequations} \label{Problem:DataOptimizationv1}
	\begin{align} 
		\underset{ \substack{ \{ p_{l,k}^{\mathrm{ul}} \geq 0 \},  \{p_{l,k}^{\mathrm{dl}} \geq 0 \}, \\ \{ \pmb{\psi}_{l,k}  \in \mathcal{P} \}  }}{\mathrm{maximize}} &  \quad \underset{(l,k)}{\mathrm{min}} 
		\quad  w_{l,k}^{\mathrm{ul}} R_{l,k}^{\mathrm{ul}} +  w_{l,k}^{\mathrm{dl}} R_{l,k}^{\mathrm{dl}} \\
		\textrm{subject to} & \quad
		p_{l,k}^{\mathrm{ul}} \leq P_{\mathrm{max},l,k}^{\mathrm{ul}} \;, \forall l,k, \label{eq:ULPowerConst}\\
		& \quad \sum_{k=1}^K p_{l,k}^{\mathrm{dl}} \leq P_{\mathrm{max},l}^{\mathrm{dl}} \;, \forall l, \label{eq:DLPowerConst}
	\end{align}
\end{subequations}
where $P_{\mathrm{max},l,k}^{\mathrm{ul}}, P_{\mathrm{max},l}^{\mathrm{dl}}$ is the maximum power that each user and BS can allocate to in the uplink and downlink, respectively. Problem~\eqref{Problem:DataOptimizationv1} is combinatorial and its optimal solution for the pilots is obtained by exhaustive search over possible pilot assignments. For the pilot length of $\tau_p = K$, there are $(K!)^{L-1}$  different pilot assignments, which is impossible to perform in a large-scale system \cite{Chien2018a}. We notice that by introducing weights and considering spatially correlated fading, problem~\eqref{Problem:DataOptimizationv1} is a generalization of previous works which only focus on uncorrelated Rayleigh channel for either the uplink or downlink transmission \cite{Xu2015a}.

\subsection{Heuristic pilot assignment with fixed data powers} \label{SubSec:PilotAssignment}
A low-complexity heuristic pilot assignment algorithm is proposed, in which the user having the lowest weight sum SE is prioritized. For a given set of transmit power coefficients, problem~\eqref{Problem:DataOptimizationv1} becomes
\begin{equation} \label{Problem:PilotAssignment}
	\begin{aligned}  
		& \underset{ \{ \pmb{\psi}_{l,k} \} \in \mathcal{P}  }{\mathrm{maximize}} && \underset{(l,k)}{\mathrm{min}}
		\quad  w_{l,k}^{\mathrm{ul}} R_{l,k}^{\mathrm{ul}} +  w_{l,k}^{\mathrm{dl}} R_{l,k}^{\mathrm{dl}}.
	\end{aligned}
\end{equation}
We assume that all $KL$ users first randomly select the pilot signals such that there is no pilot contamination inside a cell. After that, cell~$l$ reallocates the pilot signals to its $K$ users with the availability of pilot assignment information from other cells. If we define the weighted sum SE of user~$k$ in cell~$l$ as
\begin{equation}
	f_{l,k} =  w_{l,k}^{\mathrm{ul}} R_{l,k}^{\mathrm{ul}} +  w_{l,k}^{\mathrm{dl}} R_{l,k}^{\mathrm{dl}},
\end{equation}
then BS~$l$ can sort all $K$ users in the ascending order as
\begin{equation} \label{eq:Increasingcost}
	f_{l,1'} \leq  f_{l,2'} \leq \ldots  \leq f_{l,K'},
\end{equation}
where $\{ 1', 2', \ldots, K'\}$ is a permutation of the set $\{1, 2, \ldots, K\}$. From these notations, user~$k'$ in cell~$l$ has the weighted sum SE $f_{l,k'}, \forall k' = 1, \ldots, K$.
	
We now compute the normalized mean square error (NMSE) of user $k$ in cell~$l$ as
\begin{equation} \label{eq:PilotContCost}
	\begin{split}
		g_{l,k} &= \frac{ \mathbb{E} \{ \| \mathbf{e}_{l,k}^l \|^2 \} }{\mathbb{E} \{ \| \mathbf{h}_{l,k}^l \|^2 \} } = 1 - \frac{\mathrm{tr} ( \mathbf{R}_{l,k}^l \mathbf{F}_{l,k}^{-1} \mathbf{R}_{l,k}^l  )}{\mathrm{tr} (\mathbf{R}_{l,k}^l )},
	\end{split}
\end{equation}
then the channel estimation quality of the $K$ users in cell~$l$ is sorted in the ascending order as
\begin{equation} \label{eq:PilotContOrder}
	g_{l,1''} \leq g_{l,2''} \leq \ldots  \leq g_{l,K''},
\end{equation}
where $\{ 1'', 2'', \ldots, K''\}$ is a permutation of the set $\{1, 2, \ldots, K\}$. The pilot signal $\pmb{\psi}_{l,k''}$ currently used by user $k''$ is reassigned to user~$k'$. The intuition is to dedicate pilots signal subject to less pilot contamination to users with worse conditions, which is viewed as one with a smaller $f_{l,k'}$. Our proposed pilot assignment for users in cell~$l$ is illustrated in Fig.~\ref{FigPilotAssigment}. This process is implemented cell-by-cell in an iterative algorithm (please see Algorithm~\ref{Algorithmv1}). Concerning on the per-cell-based pilot assignment, a new pilot reuse set may harm the minimum weighted sum SE in the entire system. To avoid this issue, we introduce a backtracking condition to assign the pilot signals at iteration~$n$ as in Lemma~\ref{Lemma:BackTrackingCond}.
\begin{lemma} \label{Lemma:BackTrackingCond}
	If the pilot signals are only assigned to the $K$ users in cell~$l$ when the objective function in \eqref{Problem:PilotAssignment}  does not increase, then the proposed iterative pilot assignment approach converges to a fixed point.
\end{lemma}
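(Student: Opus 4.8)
The plan is to run the standard monotone-convergence argument for an iterative improvement scheme over a finite search space. First I would denote by $U^{(n)} = \min_{(l,k)} f_{l,k}$ the global minimum weighted sum SE evaluated for the pilot assignment $\{\pmb{\psi}_{l,k}\}$ held after the $n$-th per-cell update, so that the whole algorithm is a sequence of per-cell reassignments, each guarded by the backtracking condition. The first step is monotonicity: by the backtracking rule, a candidate reassignment of the $K$ pilots in cell~$l$ is committed only when it does not worsen the objective of \eqref{Problem:PilotAssignment}, and otherwise the previous assignment is restored. Hence $U^{(n+1)} \geq U^{(n)}$ for every $n$, and $\{U^{(n)}\}$ is a non-decreasing sequence.

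The second step is to bound the objective from above. In each SINR of \eqref{eq:ULSINR} and \eqref{eq:DLSINR}, the numerator is a finite trace of products of the correlation matrices weighted by powers that are capped by \eqref{eq:ULPowerConst}--\eqref{eq:DLPowerConst}, while the denominator is bounded below by the strictly positive noise variance $\sigma_{\mathrm{ul}}^2$ or $\sigma_{\mathrm{dl}}^2$. Therefore every $R_{l,k}^{\mathrm{ul}}$ and $R_{l,k}^{\mathrm{dl}}$ in \eqref{eq:ClosedULRate}--\eqref{eq:DLCorrelatedRate} is finite, the weighted sums $f_{l,k}$ are uniformly bounded, and so is $U^{(n)}$.

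The third step combines monotonicity and boundedness with the discreteness of the feasible set. Because each $\pmb{\psi}_{l,k}$ is drawn from the finite pilot book $\mathcal{P}$, there are at most $\tau_p^{KL}$ distinct assignments and hence finitely many attainable values of $U^{(n)}$; a bounded non-decreasing sequence confined to a finite set must become constant after finitely many iterations. Once $U^{(n)}$ is constant, the backtracking rule accepts no further change, so the pilot configuration stops evolving and is a fixed point of the iterative map.

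The step I expect to be the main obstacle is excluding an infinite cycle through distinct assignments that all attain the same plateau value of $U^{(n)}$, since non-decrease forbids a strict drop but not a lateral permutation of pilots that leaves the minimum unchanged. I would close this gap by reading the condition strictly --- a reassignment is committed only when it yields a strict increase, so that on the plateau every candidate update is rejected and the state freezes --- or, if equal-objective moves are permitted, by combining the deterministic update order of Algorithm~\ref{Algorithmv1} with the finiteness of the state space to rule out a recurring cycle.
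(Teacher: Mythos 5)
Your proof is correct, and its backbone is the same as the paper's: the backtracking condition makes the sequence of minimum weighted sum SE values nondecreasing, and the power budgets in \eqref{eq:ULPowerConst}--\eqref{eq:DLPowerConst} (together with the positive noise variance) bound every $f_{l,k}$, hence the objective, from above. Where you diverge is in how you close the argument. The paper simply invokes the monotone convergence principle---a nondecreasing sequence bounded above converges---which yields only asymptotic convergence of the objective value. You instead exploit the discreteness of the search space: with at most $\tau_p^{KL}$ assignments there are finitely many attainable objective values, so the nondecreasing sequence must become exactly constant after finitely many iterations. That is a strictly stronger conclusion (finite-time stabilization rather than convergence in the limit) obtained by more elementary means, and it better matches how Algorithm~\ref{Algorithmv1} actually terminates via \eqref{eq:StoppingCriterion}. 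Your final paragraph also identifies a subtlety the paper passes over silently: the acceptance test \eqref{eq:AssignmentCriterion} is non-strict, so equal-value reassignments are admitted and the pilot \emph{configuration} can in principle keep moving laterally on a value plateau even after the \emph{value} has frozen. This does not invalidate the paper's proof, because the paper only ever argues about the value sequence and calls that the ``fixed point''; but if one reads the lemma as asserting a fixed pilot configuration, your strict-acceptance fix is the right repair. One caution on your fallback fix: a deterministic update order on a finite state space only guarantees eventual periodicity, not a period-one cycle, so by itself it does not exclude lateral cycling---you would still need either strict acceptance or the stopping rule \eqref{eq:StoppingCriterion} to halt the algorithm once the value stabilizes.
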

\begin{proof}
	Let us denote $h_{l}^{\ast, (n)}$ and $h_{l}^{\ast, (n-1)}$  as the minimum weighted sum SE per user before and after BS~$l$ reassigns the pilot signals, i.e.,
	\begin{align}
		&h_l^{\ast, (n)} = \underset{(l',k)}{\mathrm{min}}
		\quad  w_{l',k}^{\mathrm{ul}} R_{l',k}^{\mathrm{ul}, (n)} +  w_{l',k}^{\mathrm{dl}} R_{l',k}^{\mathrm{dl}, (n)}, \label{eq:hln1}\\
		&h_l^{\ast, (n-1)} = \underset{(l',k)}{\mathrm{min}}
		\quad  w_{l',k}^{\mathrm{ul}} R_{l',k}^{\mathrm{ul},(n-1)} +  w_{l',k}^{\mathrm{dl}} R_{l',k}^{\mathrm{dl}, (n-1)}, \label{eq:hln2}
	\end{align}
	where $R_{l',k}^{\mathrm{ul}, (n)}, R_{l',k}^{\mathrm{dl}, (n)}, R_{l',k}^{\mathrm{ul}, (n-1)},$ and $R_{l',k}^{\mathrm{dl}, (n-1)}$ are the uplink and downlink SEs at iteration~$n$ and $n-1$, respectively. A criterion is then used to approve if the reassignment is valid by checking the backtracking condition
	\begin{equation} \label{eq:AssignmentCriterion}
		h_l^{\ast, (n)} \geq  h_l^{\ast, (n-1)},
	\end{equation}
	which ensures a nondecreasing objective function in problem~\eqref{Problem:PilotAssignment}. We stress that the condition~\eqref{eq:AssignmentCriterion} needs to be implemented in each iteration due to the non-convexity of \eqref{eq:hln1} and \eqref{eq:hln2}. Moreover the limited power budgets in \eqref{eq:ULPowerConst} and \eqref{eq:DLPowerConst} ensure that this objective function is bounded from above for any set of pilot and data power coefficients in the feasible domain. Consequently, problem \eqref{Problem:PilotAssignment} converges to a fixed point and we conclude the proof.
\end{proof}
During assigning the pilot signals to the users over cells, the proposed approach will be stopped when, for example, a small variation of two consecutive iterations, which is computed for all the $L$ cells as
\begin{equation} \label{eq:StoppingCriterion}
	\sum_{l=1}^L \left| h_l^{\ast, (n)} - h_l^{\ast, (n-1)} \right| \leq \epsilon
\end{equation}
where $\epsilon \geq 0$ is a given accuracy. The proposed pilot assignment approach is applied to all the cells as in Algorithm~\ref{Algorithmv1}.
	
\subsection{Data power control} \label{Subsec:ULDLPowerControl}
For a given pilot assignment, problem \eqref{Problem:DataOptimizationv1} now reduces to the data power control problem. To obtain a low-complexity, the uplink and downlink data power controls can be separately optimized. We therefore present a framework which is applied for both using the nominal parameters:  Let us denote $ \alpha  \in \{ \mathrm{ul}, \mathrm{dl} \}$, the max-min fairness problem is now formulated as
\begin{equation} \label{Problem:DataOptimizationv2}
	\begin{aligned} 
		& \underset{ \{ p_{l,k}^{\alpha} \geq 0 \} }{\mathrm{maximize}} && \underset{(l,k)}{\mathrm{min}}
		\quad  w_{l,k}^{\alpha} R_{l,k}^{\alpha} \\
		& \textrm{subject to}
		& & \mbox{Constraints in } \eqref{eq:Powerbudget},
	\end{aligned}
\end{equation}
where the power budget constraints are
\begin{equation} \label{eq:Powerbudget}
	\begin{cases}
		p_{l,k}^{\mathrm{ul}} \leq P_{\mathrm{max},l,k}^{\mathrm{ul}} \;, \forall l,k, &\mbox{for the uplink}, \\
		\sum_{k=1}^K p_{l,k}^{\mathrm{dl}} \leq P_{\mathrm{max},l,k}^{\mathrm{dl}} \;, \forall l, &\mbox{for the downlink}.
	\end{cases}
\end{equation}
By adopting the epigraph representation \cite{Boyd2004a}, \eqref{Problem:DataOptimizationv2} is equivalent to as
\begin{equation} \label{Problem:DataOptimizationv5}
	\begin{aligned} 
		& \underset{\{ p_{l,k}^{\alpha} \geq 0 \}, \xi }{\mathrm{maximize}} &&  \xi \\
		& \textrm{subject to} && \mathrm{SINR}_{l,k}^{\alpha} \geq \xi \;, \forall l,k, \\
		&&& \mbox{Constraints in } \eqref{eq:Powerbudget},
	\end{aligned}
\end{equation}
where the new optimization variable $\xi \in \{ \xi^{\mathrm{ul}}, \xi^{\mathrm{dl}} \}$ is the minimum SINR per user. In \eqref{Problem:DataOptimizationv5}, the objective function and uplink power constraints aligns with monomials. The SINR expressions and downlink power constraints can be recast as posynomials. Consequently, \eqref{Problem:DataOptimizationv5} is a geometric program whose global optimum is able to attain by a general purpose optimization toolbox \cite{Chien2018a}.\footnote{We have implemented the pilot assignment and data power control iteratively, but no further improvement is observed.}
	
Our proposal to obtain a local solution to \eqref{Problem:DataOptimizationv1} is summarized in Algorithm~\ref{Algorithmv1}. The computational complexity of the pilot assignment is from sorting the pilot signals and from computing $KL$ inverse matrices. The matrix inversion is more expensive since each BS has many antennas. The computational complexity of the pilot assignment is in the order of $\mathcal{O} \left( \nu N_1 L^2 K M^3 \right)$, where $N_1$ is the number of iterations requires to reach the fixed point and the constant value $\nu$ stands for the effectiveness of computing matrix inverse \cite{krishnamoorthy2013matrix}. Next, the data power control by the interior-point method consumes the computational complexity of the order of  $\mathcal{O} \left( 2N_2^\alpha\max \left\{ 2L^3 K^3, F_1 \right\} \right)$, where $F_1$ is the first and second derivative estimation cost of computing the SINR constraints in \eqref{Problem:DataOptimizationv5}. Consequently, the total computational complexity of Algorithm~\ref{Algorithmv1} is in the order of $\mathcal{O} \left( \nu N_1 L^2 K M^3 +   2N_2^\alpha\max \left\{ 2L^3 K^3, F_1 \right\}  \right)$. By conditioned on the dominated computational complexity from the matrix inverses $\mathbf{F}_{l,k}^{-1}, \forall l,k,$ and the computational complexity of the pilot assignment and either the uplink or downlink data power control is $\mathcal{O} \left( \nu N_1 L^2 K M^3 +   N_2^\alpha\max \left\{ 2L^3 K^3, F_1 \right\}  \right)$. 
	
\begin{algorithm}[t]
	\caption{An approach finding a fixed point to \eqref{Problem:DataOptimizationv1}}\label{Algorithmv1}
	\begin{algorithmic}[1]
		\State \textbf{Input} Set $\{ P_{\max,l,k}^{\mathrm{ul}}$, $P_{\max,l,k}^{\mathrm{dl}} \}, \epsilon, h_l^{\ast,(0)} =0, h_l^{\ast,(1)} =1, \forall l, $ and preliminary pilot assignment $\{ \pmb{\psi}_{l,k}^{\ast} \}$ obtained by randomization; Select initial transmit powers $\{p_{l,k}^{\mathrm{ul}}, p_{l,k}^{\mathrm{dl}} \}$. Set $n=0$.
		\While {\eqref{eq:StoppingCriterion}  unsatisfied}{
			\State Set $n = n + 1$
			\For{$l=1, \ldots, L$}
			\State BS~$l$ computes \eqref{eq:Increasingcost} and \eqref{eq:PilotContOrder}, then assigns the pilot signals as in Fig.~\ref{FigPilotAssigment}
			\State BS~$l$ verifies the backtracking condition~$h_l^{\ast, (n)} \geq h_l^{\ast, (n-1)}$
			\textbf{If} this is satisfied, \textbf{then} update $h_l^{\ast,(n)}$ and broadcast the new pilot assignment $\{\pmb{\psi}_{l,k}^{\ast}\}_{k=1}^K$. \textbf{Otherwise} keep the previous one.
			\State BS~$l$ checks the stopping condition. \textbf{If} not satisfied, \textbf{then} continue by setting $l = l+1$. \textbf{Otherwise} go to step 9.
			\EndFor
			\State{Update the cost $\sum_{l=1}^L \big| h_l^{\ast, (n)} - h_l^{\ast, (n-1)} \big|$ for the stopping criterion \eqref{eq:StoppingCriterion}.}
			\EndWhile}
		\State Solve problem \eqref{Problem:DataOptimizationv5} to obtain the optimal data powers  $\{ p_{l,k}^{\ast, \mathrm{ul}},p_{l,k}^{\ast, \mathrm{dl}} \}, \forall l,k.$
		\State \hspace{-0.5cm}\textbf{Output} $\{ p_{l,k}^{\ast, \mathrm{ul}} $, $p_{l,k}^{\ast, \mathrm{dl}} \}$, and $\{ \pmb{\psi}_{l,k}^{\ast} \}, \forall l,k$.
	\end{algorithmic}
\end{algorithm}
\begin{figure}[t]
	\centering
	\includegraphics[trim=0.6cm 0.1cm 0.6cm 0.5cm, clip=true, width=3.2in]{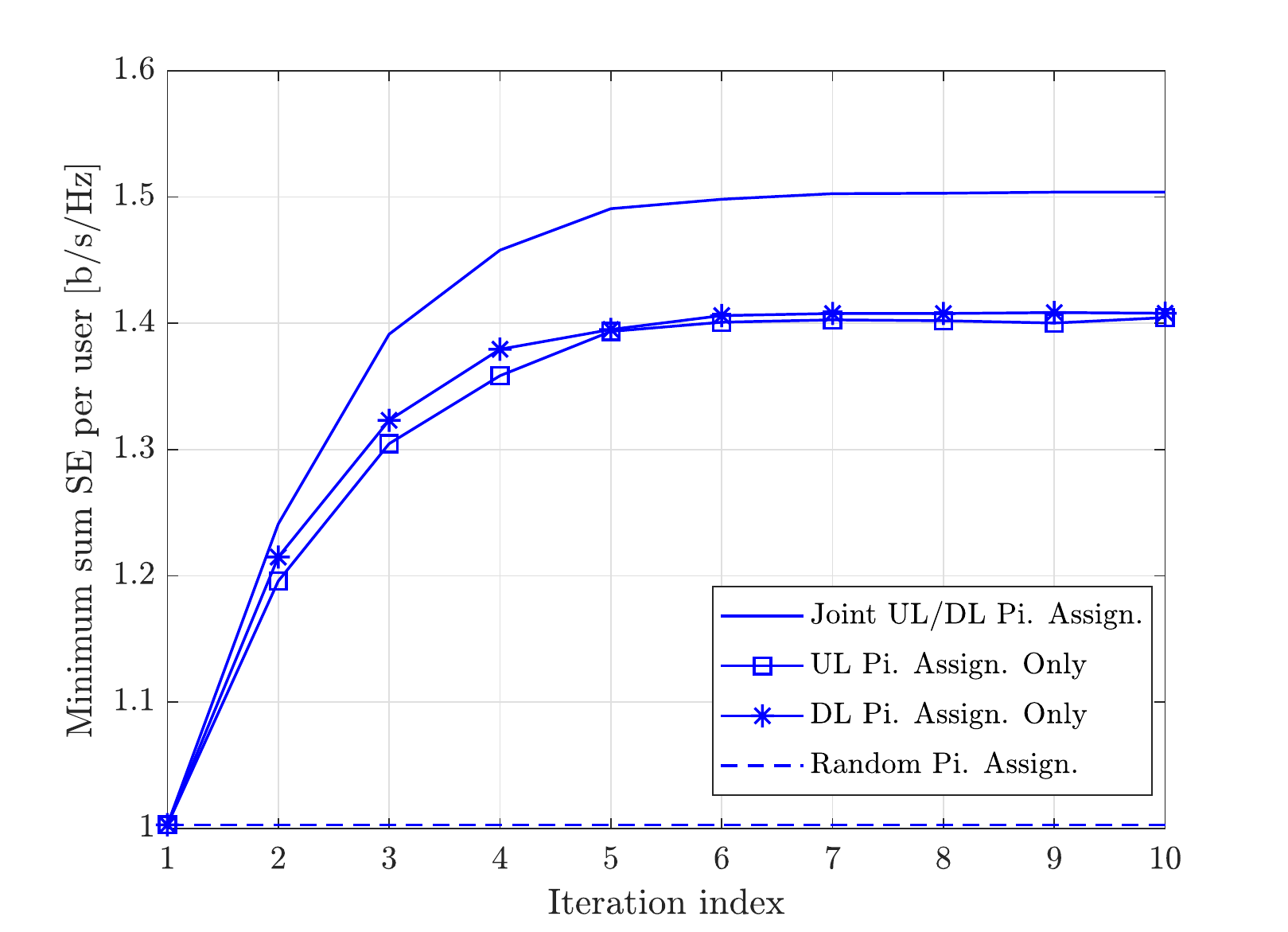} 
	\caption{The convergence of the proposed pilot assignment for a network with $4$ users per cell.}
	\label{FigConvergence}
\end{figure}
	
\section{Numerical Results} \label{Sec:NumericalResults}
A network is considered with $4$ square cells covering the area $0.5$ km$^{2}$ utilizing the wrap-around technique at the edges to avoid boundary effects, and therefore one BS has eight neighbors. In each cell, a BS with $200$ antennas is at the center and serving $K$  uniformly distributed users with a minimum distance to the serving BS being $35$~m. There are $K$ orthogonal pilot signals, while the maximum power is $P_{\mathrm{max},l,k}^{\mathrm{ul}} = 200$ mW and $P_{\mathrm{max},l}^{\mathrm{dl}} = 200K$ mW for an equal total power budget of the uplink and downlink data transmission thanks to the uplink-downlink duality \cite{Boche2002a}. The noise variance is $-96$ dBm corresponding to the noise figure $5$ dB. The large-scale fading coefficients are
\begin{equation}
	\beta_{l,k}^j [\mathrm{dB}] = -148.1 - 37.6 \log_{10}(d_{l,k}^j/1 \textrm{km}) + z_{l,k}^j, 
\end{equation}
where $d_{l,k}^j$ in km is the distance between user~$k$ in cell~$l$ and BS~$j$ \cite{Chien2018a}. The shadow fading $z_{l,k}^j$ follows a log-normal distribution with standard deviation $7$~dB. The covariance matrix of the channel from user~$k$ in cell~$l$ and BS~$j$ is defined by the exponential correlation model, which models a uniform linear array as
\begin{equation}
	\mathbf{R}_{l,k}^j = \beta_{l,k}^j  \begin{bmatrix}
		1 &  r_{l,k}^{j,\ast}  & \cdots & \big( r_{l,k}^{j,\ast} \big)^{M-1} \\ 
		r_{l,k}^{j}& 1 &  \cdots & \big( r_{l,k}^{j,\ast} \big)^{M-2} \\
		\vdots & \vdots & \ddots & \vdots  \\
		\big( r_{l,k}^{j} \big)^{M-1}&  \big( r_{l,k}^{j} \big)^{M-2} & \cdots & 1
	\end{bmatrix},
\end{equation}
where the spatial correlation $r_{l,k}^{j}= \mu e^{j \theta_{l,k}^j}$ with the correlation magnitude $\mu$ in the range $[0,1]$ and the user incidence angle to the array boresight being $\theta_{l,k}^j$. By setting the weights $w_{l,k}^{\mathrm{ul}}, w_{l,k}^{\mathrm{dl}} \in \{ 0, 1 \}, \forall l,k,$ (i.e., $w_{l,k}^{\mathrm{dl}} = 0$ if only considering the uplink data transmission; $w_{l,k}^{\mathrm{ul}} = 0$ if only considering the downlink data transmission; $w_{l,k}^{\mathrm{ul}} = w_{l,k}^{\mathrm{dl}} =  1$ if both the uplink and downlink data transmissions are considered) the following benchmarks are used for comparison:\footnote{Exhaustive research  is not included for comparison due to its extremely heavy complexity. One realization of user locations needs to evaluate the SE of $(K!)^{L-1} = 373,248,000$ combinations of pilot signals.} 
\begin{itemize}
	\item[$1)$] \textit{Random pilot assignment (Denoted as ``Ran. Pi. Assign." in the figures):} The pilot signals are randomly assigned to all users, which was used in, for example \cite{Chien2018a}.
	\item[$2)$] \textit{Greedy pilot assignment (Denoted as ``Gre. Pi. Assign." in the figures)}: The pilot signals are assigned based on the similarity between the covariance matrices, which was proposed by \cite{You2015a}.
	\item[$3)$] \textit{Uplink pilot assignment (Denoted as ``UL. Pi. Only" in the figures):} The pilot signals are assigned based on the uplink SE only.
	\item[$4)$] \textit{Downlink pilot assignment (Denoted as ``DL. Pi. Only" in the figures):} The pilot signals are assigned based on the downlink SE only. 
	\item[$5)$] \textit{Pilot assignment  for the joint UL/DL SE enhancement (Denoted as ``Joint UL/DL Pi. Assign." in the figures):} The pilot signals are assigned by the weighted sum SE per user.
\end{itemize}
\begin{figure}[t]
	\centering
	\includegraphics[trim=0.6cm 0.2cm 0.6cm 0.7cm, clip=true, width=3.2in]{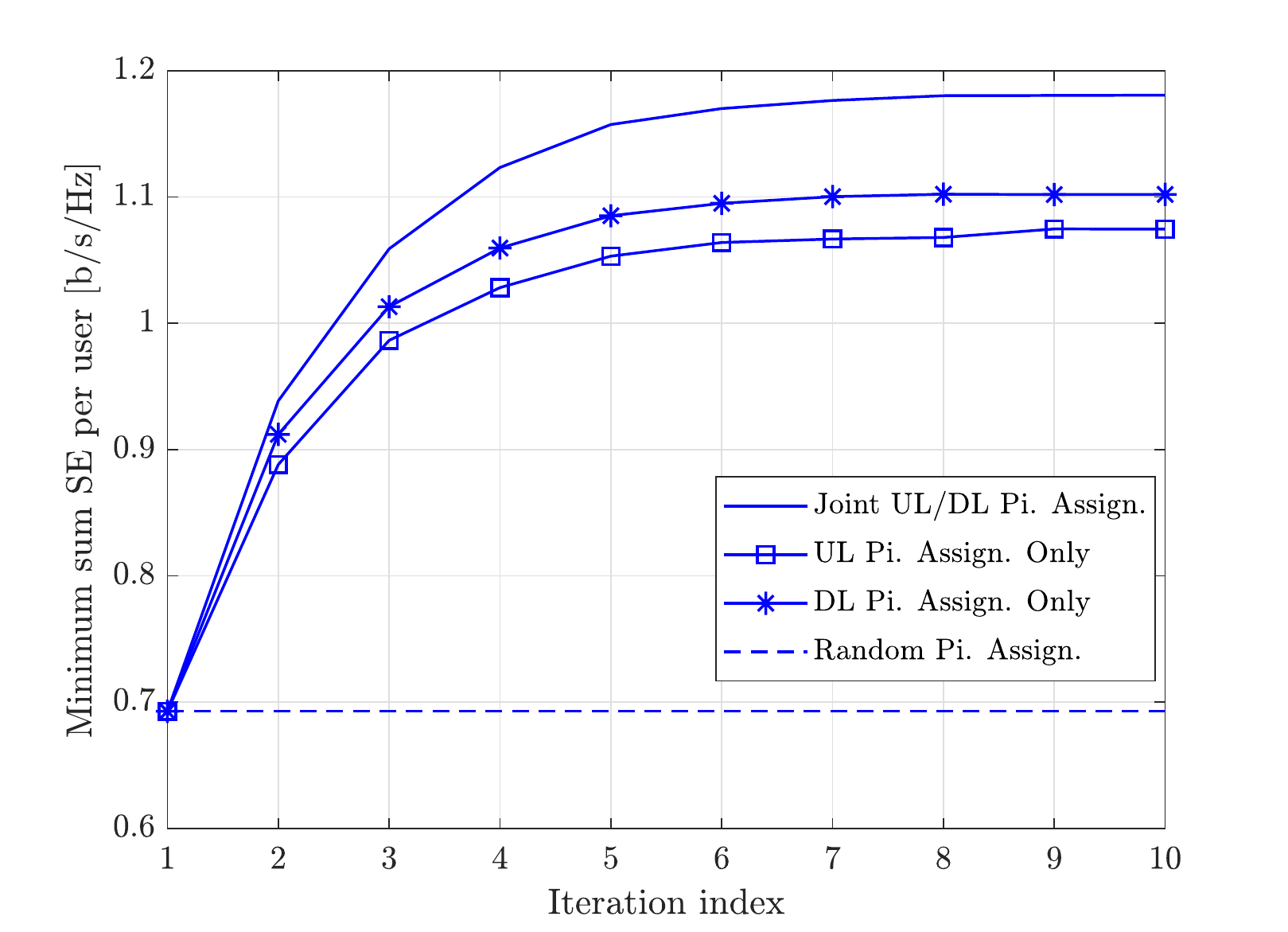} 
	\caption{The convergence of the proposed pilot assignment for a network with $6$ users per cell.}
	\label{FigConvergencev1}
\end{figure}
\begin{figure}[t]
	\centering
	\includegraphics[trim=0.6cm 0.2cm 0.6cm 0.7cm, clip=true, width=3.2in]{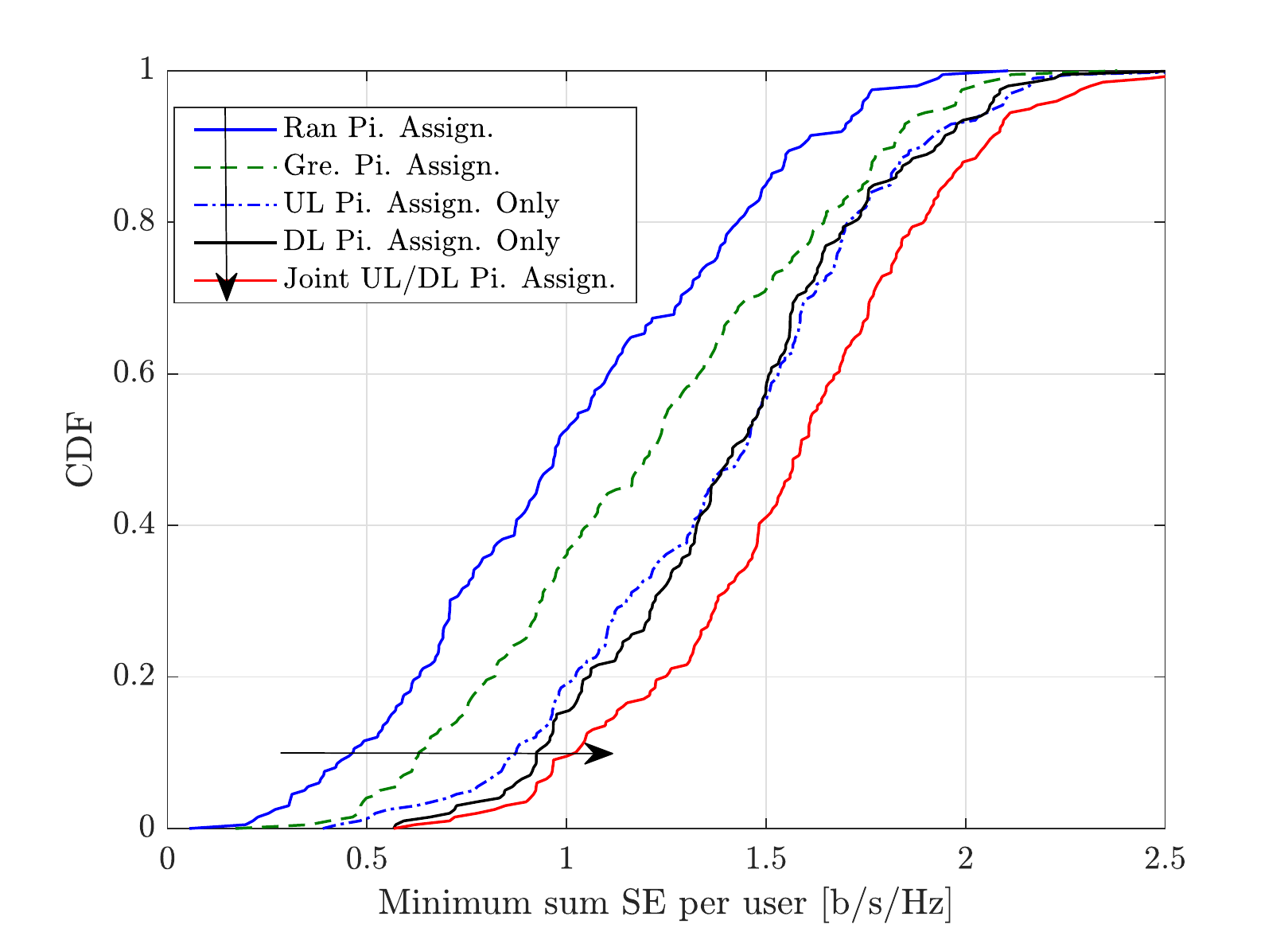} 
	\caption{The cumulative distribution function of the minimum sum SE per user without data power control.}
	\label{FigCDFNoPowerCtr}
\end{figure}
\begin{figure}[t]
	\centering
	\includegraphics[trim=4cm 9.2cm 3.8cm 9.7cm, clip=true, width=3.2in]{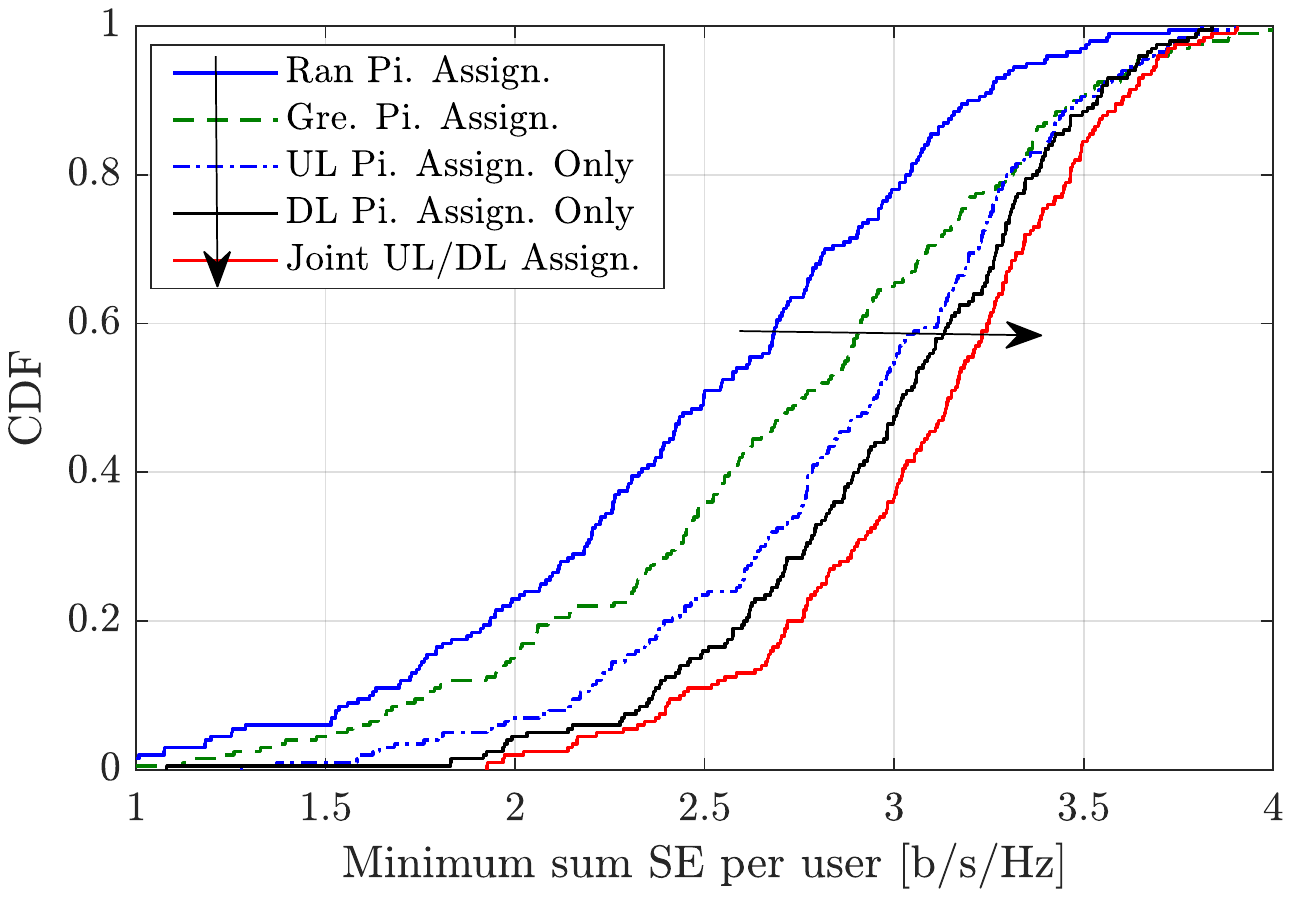} 
	\caption{The cumulative distribution function of the minimum sum SE per user with data power control.}
	\label{FigCDFPowerCtr}
\end{figure}
\begin{figure}[t]
	\centering
	\includegraphics[trim=0.6cm 0.2cm 0.6cm 0.7cm, clip=true, width=3.2in]{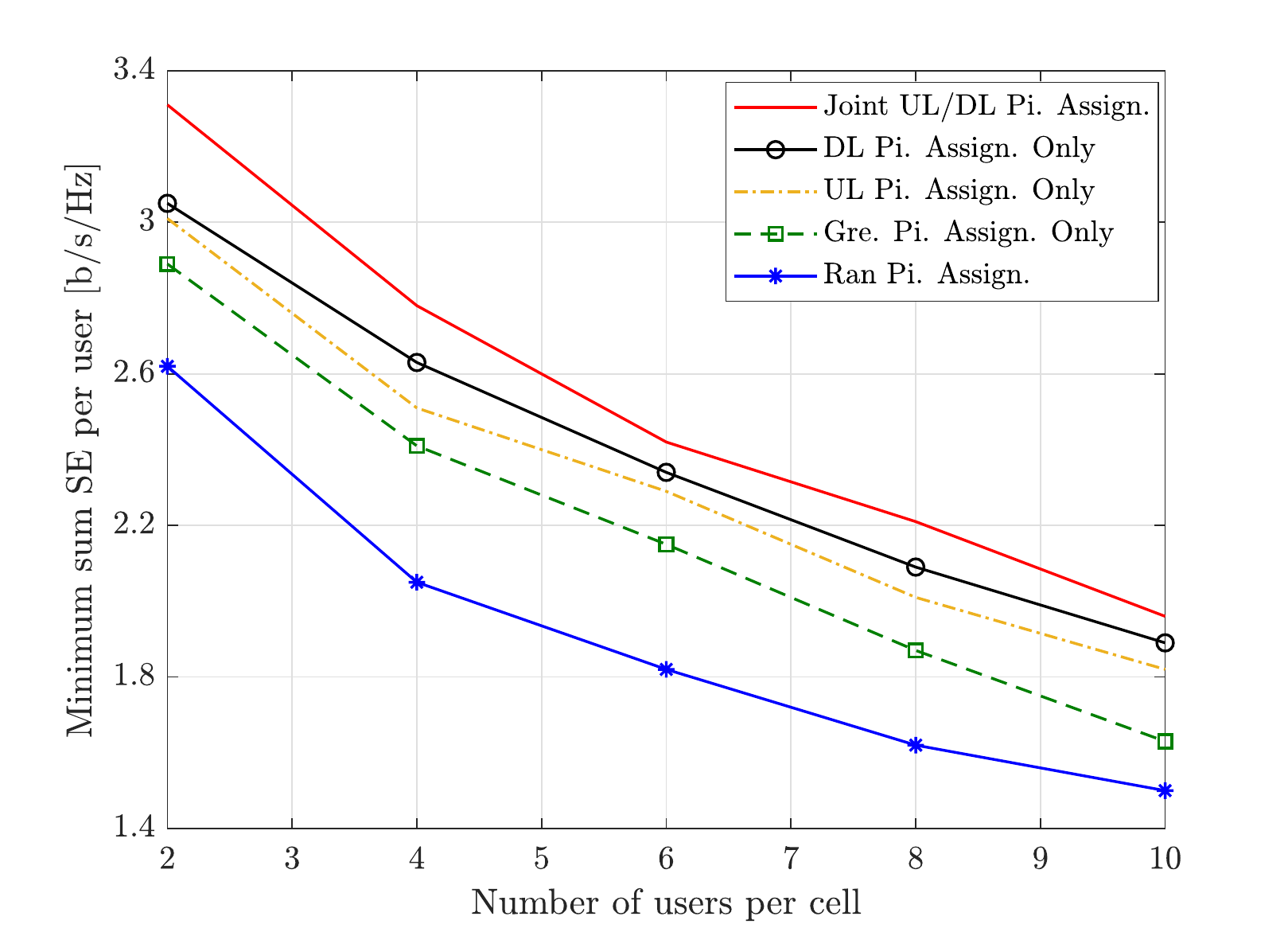} 
	\caption{The minimum sum SE per user versus the number of users per cell with data power control.}
	\label{FigAverageSE}
\end{figure}
Figs.~\ref{FigConvergence} and \ref{FigConvergencev1} display the convergence of proposed pilot assignments for a network with $4$ users and $6$ users per cell, respectively. The convergence is obtained after less than $8$ iterations for all the considered scenarios. When each BS serves $4$ users, the sum SE per user converges to about $1.4$ b/s/Hz when utilizing either the uplink or downlink SE as the utility metric to assign the pilot signals. However, relying on the downlink SE to assign the pilot signals yields $2\%$ better the sum SE per user than utilizing the uplink SE when increasing the number of users per cell to $6$ users.
	
Fig.~\ref{FigCDFNoPowerCtr} shows the cumulative distribution function (CDF) of the minimum weighted SE per user for a network where each cell has $4$ users. The greedy pilot gives $1.2 \times$ SE better than the random assignment. While assigning the pilot signals based on either the uplink or downlink SE gives almost equivalent performance, but it provides better performance than the random pilot assignment by $1.5 \times$. Meanwhile, the improvement of joint pilot assignment is  up to $39\%$ in weighted minimum sum SE per user compared with the second best and it approves the locality of Algorithm~\ref{Algorithmv1}. Finally, Fig.~\ref{FigCDFPowerCtr} manifests the benefits of data power control based on the proposed pilot assignment over the other benchmarks. We observe that the greedy pilot assignment only outperforms the random pilot assignment $1.1\times$. Meanwhile, an improvement up to $2.29 \times$ better weighted sum SE per user than random pilot assignment is obtained. In addition, the data power control improves the sum SE per user up to about $2\times$ compared with the fixed data power allocation.
	
Fig.~\ref{FigAverageSE} plots the minimum sum SE per user versus the number of users per cell. Specifically, the minimum sum SE per user decreases when there are more users in the coverage area that generate more mutual interference. For instance, The joint pilot assignment reduce the minimum sum SE per user $1.7 \times$ as the number of user per cell increases from $2$ to $8$ users. We also observe the benefits of combining the joint pilot assignment and data power control that results in a superior SE improvement up to  $1.4 \times$ compared with the random assignment.

\section{Conclusion}
This paper has formulated and solved a max-min sum SE per user optimization problem considering both the pilot assignment and data power control for cellular Massive MIMO systems with correlated Rayleigh channels. We observed significant improvements of pilot assignment to the minimum sum SE per user compared with the other related works. Interestingly, only deploying the uplink or downlink SE as side information to assign pilot still yields good sum SE to weak users if the max-min fairness optimization is considered.
	
\bibliographystyle{IEEEtran}
\bibliography{IEEEabrv,refs}
\end{document}